\newtheorem{rem}{Remark}
\newcommand{\Real}{\mathbb{R}}
\newcommand{\eps}{\varepsilon}
\newcommand{\abs}[1]{\left\vert#1\right\vert}
\newcommand{\ra}{\rightarrow}
\newcommand{\set}[1]{\left\{#1\right\}}
\begin{document}
\title{Symbolic Reduction for Formal Synthesis of Global Lyapunov Functions}
\titlerunning{Symbolic Reduction for Formal Synthesis of Global Lyapunov Functions}
%
\author{
Jun Liu\inst{1}\orcidID{0000-0001-8762-2299} \and 
Maxwell Fitzsimmons\inst{2}\orcidID{0000-0002-3764-4542} 
}
\authorrunning{Liu and Fitzsimmons}
%
\institute{
University of Waterloo, Waterloo, Ontario N2L 3G1, Canada \\
\email{j.liu@uwaterloo.ca} \and
University of Waterloo, Waterloo, Ontario N2L 3G1, Canada \\
\email{mfitzsimmons@uwaterloo.ca}
}
\maketitle              
\begin{abstract}
We investigate the formal synthesis of global polynomial Lyapunov functions for polynomial vector fields. We establish that a sign-definite polynomial must satisfy specific algebraic constraints, which we leverage to develop a set of straightforward symbolic reduction rules. These rules can be recursively applied to symbolically simplify the Lyapunov candidate, enabling more efficient and robust discovery of Lyapunov functions via optimization or satisfiability modulo theories (SMT) solving. In many cases, without such simplification, finding a valid Lyapunov function is often infeasible. When strict Lyapunov functions are unavailable, we design synthesis procedures for finding weak Lyapunov functions to verify global asymptotic stability using LaSalle's invariance principle. Finally, we encode instability conditions for Lyapunov functions and develop SMT procedures to disprove global asymptotic stability. Through a series of examples, we demonstrate that the proposed symbolic reduction, LaSalle-type conditions, and instability tests allow us to efficiently solve many cases that would otherwise be challenging.

\keywords{Nonlinear dynamical systems  \and Lyapunov functions \and Formal verification.}
\end{abstract}

\section{Introduction}

Stability analysis is a fundamental problem in the formal modelling and verification of dynamical systems. Lyapunov methods \cite{lyapunov1992general,lasalle1961stability,zubov1961methods} remain among the most effective tools for establishing stability, particularly in determining whether a system will converge to a desired equilibrium point from a region beyond a neighborhood of the equilibrium. By constructing a Lyapunov function\textemdash a scalar function that decreases along system trajectories\textemdash one can certify stability without explicitly solving the system equations. Furthermore, any level set of a Lyapunov function provides a certified region of attraction. If a Lyapunov function is valid across the entire state space and its level sets are unbounded in all directions (a property known as radial unboundedness), then we can establish global attraction. Despite their effectiveness, finding suitable Lyapunov functions, especially global ones, is often challenging. Consequently, the construction of Lyapunov functions has received considerable attention \cite{giesl2007construction,giesl2015review}.

Sum-of-squares (SOS) programming has been the state-of-the-art approach for computing Lyapunov functions for polynomial systems, both globally \cite{papachristodoulou2002construction} and locally \cite{topcu2008local}. It allows for verifying the negative or positive (semi)-definiteness of polynomials by expressing them as a sum of squares of other polynomials, which can be efficiently managed using semidefinite programming (SDP). SOSTOOLS \cite{papachristodoulou2013sostools} is a widely used toolbox for SOS optimization, including Lyapunov function synthesis. While SDP provides a candidate solution in floating-point form, obtaining an exact symbolic representation requires postprocessing, typically by either filtering out smaller terms or retaining them, often in an ad hoc manner.

On the other hand, satisfiability modulo theories (SMT) solvers \cite{barrett2018satisfiability,de2008z3} offer an alternative approach with rigorous handling of exact arithmetic. SMT solvers enable direct verification and synthesis of Lyapunov functions by encoding stability conditions as logical formulas over real numbers. This approach allows for precise verification without numerical errors, making it particularly suitable for systems where symbolic precision is critical and SMT solving is feasible. SMT-based methods have recently gained traction as a promising tool for the formal synthesis of Lyapunov functions and controllers \cite{chang2019neural,ahmed2020automated,ravanbakhsh2015counter,ravanbakhsh2019learning,zhou2022neural,edwards2024fossil,liu2024tool,abate2017automated,liu2025physics}, especially in cases where SDP-based methods face limitations due to floating-point precision or the non-convex nature of the problems.

Linear programming (LP) is another framework that has been effectively used for the synthesis of Lyapunov functions. In \cite{sankaranarayanan2013lyapunov}, linear constraints are derived using interval evaluation of the polynomial form and "Handelman representations" for positive polynomials over polyhedral sets. Linear constraints can also be generated from simulations for a given template Lyapunov function, with counterexamples provided by an SMT solver \cite{kapinski2014simulation}. This is often done through a counterexample-guided iterative synthesis (CEGIS) procedure. The works in \cite{ravanbakhsh2015counter,ravanbakhsh2019learning} were among the first to employ the CEGIS procedure for Lyapunov function synthesis. The study in \cite{ahmed2020automated} also investigated the sound synthesis of Lyapunov functions and developed both LP- and SMT-based CEGIS implementations aimed at being simple and effective. One can also formulate the problem of identifying a polynomial Lyapunov function as a semi-algebraic problem and solve it using a computer algebra system (CAS). Notably, the works in \cite{she2013discovering,she2009semi} employ this approach to discover local Lyapunov functions and perform local stability analysis.

In this work, we aim to extend the use of SMT and LP solvers for the synthesis of global Lyapunov functions, addressing several gaps in the existing literature. First, for a candidate Lyapunov function to be a valid global Lyapunov function, both the Lyapunov function and its Lie derivative, i.e., the derivative along the vector field, must be globally sign-definite. For a polynomial vector field and a polynomial Lyapunov function candidate, simple examples reveal that such global sign-definiteness imposes specific restrictions on the Lie derivative and, as a result, on the Lyapunov function candidate, which can be checked and resolved algebraically. While one might expect a CEGIS procedure to identify these constraints via counterexamples, this is not always feasible if the template is initially unrestricted (see our motivating example in Section \ref{sec:necessary} and Claim \ref{claim} in Section \ref{sec:SR}). We take a first step in this direction by proposing an explicit symbolic reduction procedure that can be applied prior to optimization and SMT solving.

Second, not all globally asymptotically stable dynamical systems admit readily available strict Lyapunov functions, particularly if we restrict the search to polynomial functions of a given degree. A well-known result in dynamical systems, LaSalle's invariance principle \cite{lasalle1961stability,khalil2002nonlinear}, allows global asymptotic stability to be established using a weak Lyapunov function. However, applying LaSalle’s invariance principle often requires verifying that the largest invariant set within the set where the Lie derivative vanishes is, in fact, the origin, which is generally infeasible because it seemingly requires exact computation of a semi-algebraic set. We address this by developing an alternative sufficient condition that can be encoded as a satisfiability problem over polynomials.

Finally, stability analysis for polynomial vector fields remains an open question. Currently, it is unknown if the problem is decidable, with conjectures suggesting it is not \cite{arnold1976problems}. Additionally, it is unclear whether a locally asymptotically stable polynomial vector field, even a homogeneous polynomial one, necessarily admits a polynomial Lyapunov function \cite{ahmadi2013stability}. Practically, this implies that when a Lyapunov function of a specific form cannot be found, no definitive conclusion can be made about the system's stability. Developing Lyapunov certificates for instability would thus be beneficial in providing information in the opposite direction. While instability results are well known in dynamical systems stability analysis, they are not yet implemented within SMT frameworks.

To bridge these gaps, this paper develops automated symbolic reduction procedures that enable efficient and robust synthesis of global Lyapunov functions using optimization and SMT solvers. We also present a synthesis procedure that leverages LaSalle’s principle to prove global asymptotic stability using weak Lyapunov functions, as well as a method for conducting instability tests. Through a range of examples, from low- to high-dimensional systems, we demonstrate that symbolic reduction significantly improves synthesis efficiency and mitigates numerical challenges in both optimization- and SMT-based synthesis. The instability tests enable highly efficient proofs of instability for randomly generated dynamical systems with up to ten state dimensions.

\section{Preliminaries}\label{sec:prel}

Consider a nonlinear system described by 
\begin{equation}
    \label{eq:sys}
    \dot{x}  = f(x),
\end{equation}
where \( f:\, \Real^n \ra \Real^n \) defines the vector field, which is assumed to be a multivariate polynomial function in this paper. We also assume that $f(0)=0$. We denote the unique solution to (\ref{eq:sys}) from the initial condition $x(0)=x_0$ by $\phi(t,x_0)$ for $t$ in the maximal interval of existence for $\phi$ from $x_0$.  

We are concerned with the stability (or instability) of the equilibrium \( x=0 \) of system (\ref{eq:sys}). Of particular interest is the construction of Lyapunov functions that can either prove or disprove the global asymptotic stability of the origin. To this end, we provide the basic definition of global asymptotic stability and cite three well-known results from the literature on Lyapunov functions.

\begin{definition}
    The equilibrium $x=0$ of (\ref{eq:sys}) is said to be 
    \begin{itemize}
        \item stable if, for every $\eps>0$, there exists a $\delta>0$ such that $\abs{x_0}<\delta$ implies $\abs{\phi(t,x_0)}<\eps$ for all $t\ge 0$;
        \item asymptotically stable if it is stable and there exists some $\rho>0$ such that $\abs{x_0}<\rho$ implies $\phi(t,x_0)\ra 0$ as $t\ra \infty$;
        \item globally asymptotically stable if it is stable and $\phi(t,x_0)\ra 0$ as $t\ra \infty$ for all $x_0\in \Real^n$.
    \end{itemize}
\end{definition}

A function \( V: \mathbb{R}^n \to \mathbb{R} \) is said to be \textit{positive definite} if it satisfies \( V(0) = 0 \) and \( V(x) > 0 \) for all \( x \in \mathbb{R}^n \setminus \{0\} \). It is \textit{positive semi-definite} if \( V(0) = 0 \) and \( V(x) \ge 0 \) for all \( x \in \mathbb{R}^n \). We say \( V \) is \textit{negative definite} (or \textit{negative semi-definite}) if \( -V \) is positive definite (or positive semi-definite). Finally, \( V \) is \textit{radially unbounded} if \( V(x) \to \infty \) as \( |x| \to \infty \).

We start with the Lyapunov theorem for global stability. 

\begin{theorem}\cite[Theorem 4.2]{khalil2002nonlinear}
\label{thm:lyap}
    Suppose that there exists a continuously differentiable function $V:\,\Real^n\ra \Real$ that satisfies the following:
    \begin{enumerate}
        \item $V$ is positive definite and radially unbounded; and
        \item $\dot V=\nabla V\cdot f$ is negative definite. 
    \end{enumerate}
    Then $x=0$ is globally asymptotically stable (GAS) for (\ref{eq:sys}). 
\end{theorem}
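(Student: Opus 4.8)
The plan is to establish the two ingredients of global asymptotic stability—Lyapunov stability at the origin and global attraction—separately, with everything resting on a single monotonicity observation. Along any solution $\phi(t,x_0)$, the chain rule gives $\frac{d}{dt}V(\phi(t,x_0)) = \nabla V \cdot f = \dot V \le 0$ by hypothesis (2), so $t \mapsto V(\phi(t,x_0))$ is nonincreasing. This is the engine of the whole argument.

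For \emph{stability}, I would fix $\eps>0$ and set $\alpha = \min_{\abs{x}=\eps} V(x)$, which is strictly positive since $V$ is positive definite and the sphere is compact. Continuity of $V$ together with $V(0)=0$ yields a $\delta>0$ with $V(x)<\alpha$ whenever $\abs{x}<\delta$. If $\abs{x_0}<\delta$, then $V(\phi(t,x_0)) \le V(x_0) < \alpha$ for all $t\ge 0$, so the solution can never reach the sphere $\abs{x}=\eps$; by continuity of trajectories it remains in $\abs{x}<\eps$, which is exactly stability. For \emph{completeness and global boundedness} I would invoke radial unboundedness: each sublevel set $\set{x : V(x)\le c}$ is bounded, hence compact. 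Since $V(\phi(t,x_0))\le V(x_0)$, every forward solution stays in the compact set $\set{x : V(x)\le V(x_0)}$ and therefore exists for all $t\ge 0$ with no finite escape time.

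For \emph{attraction}, note that $V(\phi(t,x_0))$ is nonincreasing and bounded below by $0$, so it converges to some $c\ge 0$. To rule out $c>0$, I would observe that the solution then remains in the compact set $K = \set{x : c\le V(x)\le V(x_0)}$, which excludes a neighborhood of the origin; on $K$ the negative-definite function $\dot V$ attains a maximum $-\gamma<0$, forcing $V(\phi(t,x_0))\le V(x_0)-\gamma t \ra -\infty$, contradicting $V\ge 0$. Hence $V(\phi(t,x_0))\ra 0$, and positive definiteness upgrades this to $\phi(t,x_0)\ra 0$: any sequence $t_k\ra\infty$ with $\abs{\phi(t_k,x_0)}$ bounded away from $0$ would, by boundedness of the trajectory, have a subsequential limit $p\ne 0$ with $V(p)>0$, contradicting $V\ra 0$. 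Since $x_0\in\Real^n$ was arbitrary, the origin is GAS.

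The step I expect to require the most care is the role of radial unboundedness: it is precisely what guarantees that sublevel sets are compact, which simultaneously gives completeness of forward solutions and confines each trajectory to a compact set on which the negative-definiteness of $\dot V$ can be quantified. Without it, the sublevel sets could be unbounded, trajectories could escape, and the global conclusion would fail—so I would be careful not to use local compactness arguments where the global hypothesis is genuinely needed.
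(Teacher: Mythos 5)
Your proof is correct and is essentially the classical argument for \cite[Theorem 4.2]{khalil2002nonlinear}: the paper itself gives no proof of this theorem, citing Khalil, and your route (monotone decrease of $V$ along trajectories, the $\alpha$--$\delta$ stability argument, compactness of sublevel sets from radial unboundedness giving forward completeness, and the $\max_K \dot V = -\gamma < 0$ contradiction to force $V(\phi(t,x_0)) \to 0$, hence $\phi(t,x_0) \to 0$) is exactly the standard one in that reference. The only cosmetic point is that in the stability step you should take $\delta \le \eps$ (always possible by shrinking $\delta$) so the initial point itself lies in the ball $\abs{x} < \eps$.
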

With a slight abuse of terminology, we sometimes simply say (\ref{eq:sys}) is GAS if the origin is GAS for (\ref{eq:sys}). 

The previous theorem requires a strict Lyapunov function, in the sense that $\dot V$ is negative definite, to conclude global asymptotic stability. LaSalle's invariance principle provides an alternative to prove GAS with a weak Lyapunov function, where $\dot V$ is only required to be negative semi-definite. 

\begin{theorem}\cite[Corollary 4.2]{khalil2002nonlinear}\label{thm:lasalle}
        Suppose that there exists a continuously differentiable function $V:\,\Real^n\ra \Real$ satisfying:
    \begin{enumerate}
        \item $V$ is positive definite and radially unbounded;
        \item $\dot V=\nabla V\cdot f$ is negative semi-definite;
        \item the largest invariant set in $\set{x\in\Real^n:\, \dot V(x)=0}$ is $\set{0}$.
    \end{enumerate}
    Then $x=0$ is GAS for (\ref{eq:sys}). 
\end{theorem}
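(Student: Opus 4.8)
The plan is to prove the result via the classical $\omega$-limit set argument. First I would fix an arbitrary initial condition $x_0 \in \Real^n$ and show the forward trajectory is bounded and defined for all $t \ge 0$. Because $\dot V \le 0$, the map $t \mapsto V(\phi(t, x_0))$ is nonincreasing, so the trajectory remains in the sublevel set $\Omega_c = \set{x \in \Real^n : V(x) \le c}$ with $c = V(x_0)$. Radial unboundedness together with continuity of $V$ makes $\Omega_c$ closed and bounded, hence compact; a solution confined to a compact set cannot escape in finite time, so $\phi(\cdot, x_0)$ is defined on $[0, \infty)$. The same compactness argument, applied on small sublevel sets, yields Lyapunov stability in the usual way: given $\eps > 0$, positive definiteness gives $m := \min_{\abs{x} = \eps} V(x) > 0$, and choosing $\delta \in (0,\eps)$ so that $\abs{x_0} < \delta$ forces $V(x_0) < m$ confines the trajectory to $\set{\abs{x} < \eps}$, since $V$ can never increase to the level $m$ attained on the sphere of radius $\eps$.

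Next I would analyze the $\omega$-limit set $\omega(x_0)$ of the now-bounded trajectory. Standard facts from the theory of autonomous ODEs with a locally Lipschitz right-hand side\textemdash which holds here since $f$ is polynomial\textemdash give that $\omega(x_0)$ is nonempty, compact, invariant, and that $\phi(t, x_0) \to \omega(x_0)$ as $t \to \infty$. The key additional observation is that $V$ is constant on $\omega(x_0)$: since $V(\phi(t, x_0))$ is nonincreasing and bounded below by $0$, it converges to some limit $a \ge 0$, and for any $p \in \omega(x_0)$ there is a sequence $t_k \to \infty$ with $\phi(t_k, x_0) \to p$, so $V(p) = \lim_k V(\phi(t_k, x_0)) = a$ by continuity. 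Because $\omega(x_0)$ is invariant, the trajectory through any $p \in \omega(x_0)$ stays inside $\omega(x_0)$, where $V \equiv a$; differentiating $t \mapsto V(\phi(t,p)) \equiv a$ at $t = 0$ then forces $\dot V(p) = 0$. Hence $\omega(x_0) \subset \set{x \in \Real^n : \dot V(x) = 0}$.

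Finally, $\omega(x_0)$ is an invariant subset of $\set{\dot V = 0}$, so by hypothesis (3) it is contained in the largest such invariant set, namely $\set{0}$; being nonempty, $\omega(x_0) = \set{0}$, and therefore $\phi(t, x_0) \to 0$ as $t \to \infty$. Since $x_0$ was arbitrary and stability was established in the first step, the origin is GAS. The main obstacle is not any single computation but the careful verification of the $\omega$-limit set properties\textemdash in particular its invariance\textemdash which relies on forward completeness and continuous dependence of solutions on initial data, together with the boundedness secured at the outset; these are precisely the places where radial unboundedness, by ensuring compact sublevel sets, does the essential work.
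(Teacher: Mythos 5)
Your proof is correct and is essentially the canonical argument behind this result: the paper itself gives no proof of Theorem~\ref{thm:lasalle} (it is quoted from Khalil, Corollary 4.2), and your route\textemdash compact sublevel sets from radial unboundedness giving forward completeness and stability, then the nonempty, compact, invariant $\omega$-limit set on which $V$ is constant, hence $\omega(x_0)\subset\set{x:\dot V(x)=0}$, collapsed to $\set{0}$ by hypothesis (3)\textemdash is precisely the classical invariance-principle proof found in that reference. No gaps.
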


In fact, LaSalle's invariance principle \cite{lasalle1961stability} (see also \cite[Theorem 4.4]{khalil2002nonlinear}) is more general than the theorem cited above and states that any bounded solution of (\ref{eq:sys}) converges to the largest invariant set in $\set{x \in \Real^n : \nabla V(x) = 0}$, which implies Theorem \ref{thm:lasalle} as a special case. Furthermore, Theorem \ref{thm:lasalle} implies Theorem \ref{thm:lyap} as a special case where $\set{x \in \Real^n : \nabla V(x) = 0} = \set{0}$.

Finally, a Lyapunov function can also be used to infer instability. 

\begin{theorem}\cite[p.~38]{lasalle1961stability}
    \label{thm:instability}
    If there exists a continuously differentiable function $V:\,\Real^n\ra \Real$ such that $\dot{V}$ is negative semi-definite (or $\dot V(x)\le 0$ whenever $V(x)\le 0$) and some $z\in\Real^n$ such that $V(z)<V(0)$, then $x=0$ is not GAS for (\ref{eq:sys}). 
\end{theorem}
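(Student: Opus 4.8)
The plan is to argue by contradiction: assume, contrary to the conclusion, that $x=0$ is GAS for (\ref{eq:sys}), and then derive a contradiction from the existence of $z$ with $V(z)<V(0)$. I may assume without loss of generality that $V(0)=0$, since replacing $V$ by $V-V(0)$ leaves $\dot V=\nabla V\cdot f$ unchanged and turns the two stated hypotheses into their natural normalized forms; under this normalization the distinguished point satisfies $V(z)<0$. Because we are assuming GAS, the trajectory $\phi(t,z)$ is bounded (it converges to $0$), hence it exists for all $t\ge 0$, so the following time-integral reasoning is legitimate on $[0,\infty)$.

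The heart of the argument is to show that $V$ is non-increasing along the trajectory starting at $z$, so that $V(\phi(t,z))\le V(z)<0$ for all $t\ge 0$. Under the first hypothesis, where $\dot V$ is negative semi-definite everywhere, this is immediate: $\tfrac{d}{dt}V(\phi(t,z))=\dot V(\phi(t,z))\le 0$ for every $t$, giving monotonicity directly. Under the weaker hypothesis, that $\dot V(x)\le 0$ only whenever $V(x)\le 0$, I would first establish that the sublevel set $\set{x\in\Real^n:\,V(x)\le 0}$ is forward invariant. For this I would use a first-exit-time argument: let $\tau=\sup\set{t\ge 0:\,V(\phi(s,z))\le 0 \text{ for all } s\in[0,t]}$; on $[0,\tau)$ the hypothesis yields $\dot V(\phi(s,z))\le 0$, so $V(\phi(\cdot,z))$ is non-increasing there and stays $\le V(z)<0$, and by continuity $V(\phi(\tau,z))\le V(z)<0$ if $\tau$ were finite, contradicting maximality of $\tau$. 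Hence $\tau=\infty$ and again $V(\phi(t,z))\le V(z)<0$ for all $t\ge 0$.

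With the uniform bound $V(\phi(t,z))\le V(z)<0=V(0)$ in hand, the contradiction follows from GAS and continuity of $V$: since $\phi(t,z)\ra 0$ as $t\ra\infty$, continuity gives $V(\phi(t,z))\ra V(0)=0$, which is incompatible with $V(\phi(t,z))\le V(z)<0$ for all $t$. This contradiction shows the origin cannot be GAS.

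I expect the main obstacle to be the forward-invariance step in the weaker case, i.e.\ making the first-exit-time argument rigorous and ruling out that the trajectory crosses from $\set{V\le 0}$ into $\set{V>0}$; this is where the continuity of $t\mapsto V(\phi(t,z))$ and the sign condition on $\dot V$ must be combined carefully. The globally negative-semi-definite case is routine by comparison, and the concluding limit argument is a direct application of the assumed GAS together with continuity of $V$.
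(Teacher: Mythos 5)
Your route is the same as the paper's: the paper's entire proof is the one-line observation that \(V(\phi(t,z))\le V(z)<V(0)\) for all \(t\ge 0\), which prevents \(\phi(t,z)\ra 0\); you prove exactly this monotonicity along the trajectory and then run the continuity-plus-GAS contradiction, supplying details the paper omits (global existence of \(\phi(t,z)\) under the GAS assumption, and the first-exit-time argument for forward invariance of \(\set{V\le 0}\)). For the first hypothesis, where \(\dot V\le 0\) holds on all of \(\Real^n\), your proof is correct and complete — and note the normalization is not even needed there, since \(V(\phi(t,z))\le V(z)<V(0)\) already contradicts \(V(\phi(t,z))\ra V(0)\).

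The genuine gap is the ``without loss of generality \(V(0)=0\)'' step in the parenthetical branch. Replacing \(V\) by \(W=V-V(0)\) leaves \(\dot V\) unchanged, but it turns the hypothesis ``\(\dot V(x)\le 0\) whenever \(V(x)\le 0\)'' into ``\(\dot W(x)\le 0\) whenever \(W(x)\le -V(0)\)'', not into the level-\(0\) form your exit-time argument uses: when \(V(0)>0\), the set \(\set{W\le 0}\) is strictly larger than the region where you have sign information, so the step ``on \([0,\tau)\) the hypothesis yields \(\dot V(\phi(s,z))\le 0\)'' is unjustified (for \(V(0)\le 0\) the reduction happens to be sound). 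Moreover, this cannot be patched, because with \(V(0)>0\) the parenthetical statement is false as literally written: take \(n=1\), \(f(x)=-x\), \(V(x)=1/(1+x^2)\); then ``\(\dot V(x)\le 0\) whenever \(V(x)\le 0\)'' holds vacuously since \(V>0\) everywhere, and \(V(1)<V(0)\), yet the origin is GAS. So in that branch the theorem implicitly assumes \(V(z)\le 0\) — equivalently, the normalization \(V(0)=0\) standard in LaSalle's source is part of the hypothesis, not a WLOG — and the paper's own one-liner makes the same silent assumption when it asserts \(V(\phi(t,z))\le V(z)\) under the parenthetical condition. Once \(V(0)=0\) is taken as given, so that \(V(z)<0\) strictly, your first-exit-time argument is exactly right (the strictness of \(V(z)<0\) is what rules out escape through the boundary \(\set{V=0}\), so keep it strict), and your write-up then amounts to a complete, more careful version of the paper's argument.
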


There are many variants of instability theorems \cite{lasalle1961stability}. We only discuss instability for GAS here. 
The proof of Theorem \ref{thm:instability} is straightforward. By the conditions in the theorem, we have \( V(\phi(t, z)) \le V(z) < V(0) \) for all \( t \ge 0 \), which prevents \( \phi(t, z) \rightarrow 0 \).

We adopt the multi-index notation to write polynomials and monomials. Let \( \alpha = (\alpha_1, \alpha_2, \dots, \alpha_n) \in \mathbb{N}^n \) be a multi-index representing the exponents of each variable. Here, \( \mathbb{N}^n \) denotes the set of all \( n \)-tuples of non-negative integers, allowing each \( \alpha_i \) to take any non-negative integer value. For example, we can write a polynomial $p$ of degree \( k \) without a constant term as 
\begin{equation}
\label{eq:poly}
p(x) = \sum_{1 \leq |\alpha| \leq k} c_\alpha x^\alpha = \sum_{1 \leq \alpha_1 + \alpha_2 + \dots + \alpha_n \leq k} c_\alpha x_1^{\alpha_1} x_2^{\alpha_2} \dots x_n^{\alpha_n}
\end{equation}
where \( |\alpha| = \alpha_1 + \alpha_2 + \dots + \alpha_n \) denotes the total degree of the monomial term \( x^\alpha \), and \( c_\alpha \) represents the corresponding coefficient.

\section{Necessary Conditions on Non-Positivity of Polynomials}\label{sec:main_necessary}

\subsection{A Motivating Example}

Searching for Lyapunov functions is usually done by fixing a candidate Lyapunov template and then determining parameters in this template to make it a valid Lyapunov function.

\begin{example}\label{ex:ex1}
    Consider
    \begin{equation}
       \dot x = f(x) = \begin{bmatrix}
            -x_1^3+x_1^5x_2\\
               -x_2^3-x_1^6
        \end{bmatrix}.
    \end{equation}
    We construct a quadratic Lyapunov function template of the form 
    $
    V(x) = c_0x_1^2 + c_1 x_2^2 + c_2x_1x_2 .
    $
    It can be easily computed that 
    \begin{align*}
          \dot V(x)  &  = \nabla V (x) \cdot f(x)  = 
   (2c_0 - 2c_1)x_1^6 x_2 - c_2 x_1^7 + c_2 x_1^5 x_2^2 \\
   & \qquad - 2c_0 x_1^4 - 2c_1 x_2^4  - c_2 x_1^3 x_2 - c_2 x_1 x_2^3. 
    \end{align*}
As we shall discuss in more detail in later sections, there are different approaches one can take to determine these coefficients through SMT solvers or optimization (e.g., gradient descent or LP). One of the necessary conditions we shall establish in Section \ref{sec:necessary} is that all highest and lowest odd monomials in $\dot V$ should vanish for it to be negative semi-definite. As a result, we can symbolically derive \( c_0 = c_1 \) and \( c_2 = 0 \), which means the only possible quadratic Lyapunov function for this system should be of the form  
\begin{equation}\label{eq:ex1_exact}
    V(x) = c_1(x_1^2 + x_2^2)
\end{equation}

On the one hand, if we use a complete SMT procedure to determine the coefficients such that \( V(x) > 0 \) and \( \dot V(x) \) hold for all \( x \in \Real^n \setminus \set{0} \), we should be able to recover this template (\ref{eq:ex1_exact}) and conclude that any \( c_1 > 0 \) provides a valid Lyapunov function. However, a complete SMT procedure can be slow for more than three decision variables, based on the evaluations we conduct. Therefore, being able to reduce the number of decision variables is crucial. 

On the other hand, if we were to use an optimization-based approach to solve for the parameters, any optimization procedure not uncovering (or retaining) the \textit{exact} template (\ref{eq:ex1_exact}) would not yield a valid Lyapunov function for this system. For example, one could potentially use gradient descent on the parameters with a loss function that encourages 
\begin{equation}\label{eq:lf_samples}
    V(y_j) > 0, \quad \dot V(y_j) < 0,
\end{equation}
for all \( y_j \) in a finite set of sample points \( \set{y_j}_{j=1}^N \). Alternatively, one can solve for \( (c_0, c_1, c_2) \) from (\ref{eq:lf_samples}) using LP, since these are linear constraints in the coefficients. Similarly, an incomplete SMT procedure can be used to solve for \( (c_0, c_1, c_2) \) such that (\ref{eq:lf_samples}) holds for a finite number of samples, potentially aided by a CEGIS procedure \cite{ahmed2020automated} that can add any counterexamples to the list of samples to improve the chance of discovering (\ref{eq:ex1_exact}).

Regardless of the procedure employed, as long as it is sample-based or requires numerical updates to the parameters \( (c_0, c_1, c_2) \), it will face the difficulty, if not impossibility, of discovering the exact form (\ref{eq:ex1_exact}). This limitation applies to several synthesis procedures we present in Section \ref{sec:synthesis} (see Claim \ref{claim}), and we will discuss it further in Section \ref{sec:examples} with numerical examples. This motivates the technical result in the next subsection, which can be used to symbolically reduce algebraic constraints such as \( c_2 = 0 \) and \( c_0 = c_1 \) in this example.
\end{example}

\subsection{Necessary Conditions on Non-Positivity of Polynomials}\label{sec:necessary}

We will focus on necessary conditions for the non-positivity of polynomials. 

\begin{proposition}\label{prop:sign-definite}
    Consider a polynomial $p$ of the form
    $$
    p(x) = \sum_{l \leq |\alpha| \leq k} c_\alpha x^\alpha, 
    $$
    where $l\ge 1$ and $k\ge l$ indicate the lowest and highest degrees of $p$, i.e., the lowest
    $l$ (or highest $k$) such that there exists some $\alpha\in \mathbb{N}^n$ with $\abs{\alpha}=l$ (or $k$) with $c_\alpha\neq 0$. Assume that $p(x)\le 0$ for all $x\in\Real^n$. Then the following statements are true:
    \begin{enumerate}
        \item The degrees $l$ and $k$ are even. Both \( p_l(x) = \sum_{|\alpha|=l} c_\alpha x^\alpha \) and \( p_k(x) = \sum_{|\alpha|=k} c_\alpha x^\alpha \) are negative semi-definite. 
        \item If any monomial \( c_\alpha x^\alpha \) in \( p(x) \) contains a factor \( x_i^{\alpha_i} \) whose degree is the highest among all monomials in \( p(x) \), then \( \alpha_i \) is even.
        \item If any monomial \( x^\alpha \) that appears in \( p_l(x) \) (or \( p_k(x) \)) contains a factor \( x_i^{\alpha_i} \) whose degree is the highest among all monomials in \( p_l(x) \) (or \( p_k(x) \)), then \( \alpha_i \) is even.
        \item If any monomial in $p$ takes the form $c x_i^d$ (i.e., with a single factor $x_i^d$ and coefficient $c$) and $d$ is either the lowest or highest degree among terms of the same form, then $d$ is even and $c\le 0$.
    \end{enumerate}
\end{proposition}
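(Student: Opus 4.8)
The plan is to build the entire proposition on two elementary manipulations: \emph{isotropic scaling} $x \mapsto tx$ to isolate the homogeneous components of $p$, and \emph{restriction to coordinate lines} to reduce the per-variable claims to the univariate case. First I would record the univariate base fact to be invoked repeatedly: a univariate polynomial $q(s) = \sum_d a_d s^d$ with $q(s) \le 0$ for all $s \in \Real$ must have even top and bottom degrees with non-positive leading and trailing coefficients. This follows immediately from the leading-term asymptotics as $s \to \pm\infty$ (an odd top degree flips sign) and from the lowest-term behavior as $s \to 0$.

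For statement (1), I would write the homogeneous decomposition $p(tx) = \sum_{j=l}^{k} t^j p_j(x)$ and use that $p(tx) \le 0$ for every $t > 0$. Dividing by $t^k$ and letting $t \to \infty$ gives $t^{-k} p(tx) \to p_k(x)$, so $p_k(x) \le 0$; dividing by $t^l$ and letting $t \to 0^+$ gives $p_l(x) \le 0$. Thus both $p_l$ and $p_k$ are negative semi-definite. For the parity, I would pick $x_0$ with $p_k(x_0) < 0$ (possible since $p_k \not\equiv 0$ by the definition of $k$) and invoke homogeneity, $p_k(-x_0) = (-1)^k p_k(x_0)$: if $k$ were odd this forces $p_k(-x_0) > 0$, a contradiction. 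The identical argument applied to $p_l$ shows $l$ is even.

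For statements (2) and (3), I would fix the distinguished variable $x_i$ and view $p$ as a univariate polynomial in $x_i$ with coefficients in the remaining variables, $p(x) = \sum_{m=0}^{D_i} q_m(x_{-i})\, x_i^m$, where $D_i$ is the largest exponent of $x_i$ occurring in $p$. Since $q_{D_i} \not\equiv 0$, I can choose $x_{-i}^{\ast}$ with $q_{D_i}(x_{-i}^{\ast}) \neq 0$; then $s \mapsto p(x_{-i}^{\ast}, s)$ is a non-positive univariate polynomial of degree exactly $D_i$, so the base fact forces $D_i$ to be even. Statement (3) is then the same argument applied to $p_l$ and $p_k$, which are already known to be negative semi-definite by statement (1). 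Statement (4) follows by restricting to the $x_i$-axis: setting all variables except $x_i$ to zero yields the univariate polynomial collecting exactly the pure-power terms $c\, x_i^d$, which is non-positive, and the base fact gives that its top and bottom degrees are even with non-positive coefficients.

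The routine content is the limits and asymptotics; the part needing the most care is the bookkeeping that distinguishes the \emph{total} degree conditions (statement (1), governed by scaling) from the \emph{single-variable} degree conditions (statements (2)--(4), governed by restriction), and in particular verifying in each case that the relevant leading coefficient ($p_k$, $q_{D_i}$, or the pure-power coefficient) is genuinely nonzero so that the univariate degree count is exact. No individual step is a serious obstacle; the main risk is conflating these two notions of ``highest degree.''
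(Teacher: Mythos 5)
Your proof is correct. For parts (2)--(4) it is essentially the paper's argument: the paper proves (2) by scaling the single coordinate, $x_\lambda=(x_1,\ldots,\lambda x_i,\ldots,x_n)$, and choosing $x$ so that the degree-$\alpha_i$ coefficient $\sum_{\alpha\in I}c_\alpha x^\alpha$ is nonzero, which is the same move as your freezing $x_{-i}^\ast$ with $q_{D_i}(x_{-i}^\ast)\neq 0$; (3) is likewise obtained by applying this to $p_l$ and $p_k$, and (4) by restricting to the $x_i$-axis. The genuine divergence is in part (1), and specifically in the lowest-degree claim. The paper first proves parity of $k$ by noting that $p(\lambda x)$ is an odd-degree polynomial in $\lambda$ with leading coefficient $p_k(x)\neq 0$ and hence cannot be sign-definite, and then handles $l$ by a Kelvin-type inversion: it forms $\hat p(y)=\abs{y}^{2k}\,p\left(y/\abs{y}^2\right)$, whose top-degree component $\hat p_l(y)=\abs{y}^{2k}\,p_l\left(y/\abs{y}^2\right)$ has degree $2k-l$ while the remainder has strictly lower degree, so the already-proved top-degree argument transfers to $l$. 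You instead extract both homogeneous components by direct limits, $t^{-k}p(tx)\to p_k(x)$ as $t\to\infty$ and $t^{-l}p(tx)\to p_l(x)$ as $t\to 0^+$, establishing negative semi-definiteness first, and then read off parity from homogeneity, $p_k(-x_0)=(-1)^k p_k(x_0)$ at a point with $p_k(x_0)<0$. Your route is more elementary: it dispenses with the auxiliary polynomial $\hat p$ and its degree bookkeeping (checking that $\hat R$ has degree below $2k-l$ and that $\hat p,\hat p_l$ share signs with $p,p_l$), and it cleanly decouples semi-definiteness (limits) from parity (homogeneity), whereas the paper's inversion buys the economy of a single reusable top-degree lemma applied twice. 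Your explicit attention to the nonvanishing of the relevant leading coefficient in each case ($p_k$, $q_{D_i}$, the pure-power coefficient) matches the choices the paper makes, so there is no gap.
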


\begin{proof}
(1) Suppose that $k$ is odd. Pick $x$ such that \( p_k(x) = \sum_{|\alpha|=k} c_\alpha x^\alpha \) is nonzero. Consider any $\lambda\in\Real$. Then 
\[
p(\lambda x)=\lambda^k p_k(x) + \sum_{l \leq |\alpha| < k} c_\alpha (\lambda x)^\alpha.
\]
This is an odd-degree polynomial in $\lambda$ (note that the leading term coefficient $p_k(x)\neq 0$). As a result we cannot have $p(\lambda x)$ to be sign-definite for arbitrary $\lambda\in\Real$. Hence $k$ must be even. We also must have $p_k(x)\le 0$ in order for $p(\lambda x)\le 0$ for all $\lambda\in\Real$. Since $x$ is chosen arbitrarily, $p_k(x)$ must be negative semi-definite. 

Now suppose that $l$ is odd.
For $y\neq 0$, consider
\[
p\left(\frac{y}{\abs{y}^2}\right) = p_l\left(\frac{y}{\abs{y}^2}\right) + \sum_{l < |\alpha| \le k} c_\alpha \left(\frac{y}{\abs{y}^2}\right)^\alpha
\]
Multiply both sides by $\abs{y}^{2k}$. We obtain
\begin{align*}
   \hat p(y) & := \abs{y}^{2k} p\left(\frac{y}{\abs{y}^2}\right) = \underbrace{\abs{y}^{2k} p_l\left(\frac{y}{\abs{y}^2}\right)}_{:=\hat p_l(y)} + \underbrace{\abs{y}^{2k} \sum_{l < |\alpha| \le k} c_\alpha \left(\frac{y}{\abs{y}^2}\right)^\alpha}_{:=\hat R(y)}. 
\end{align*}
It follows that $\hat p_l(y)$ is a polynomial of degree $2k-l$ and $\hat R(y)$ is a polynomial of degree strictly less than $2k-l$. Note that $\hat p$ share the same sign with $p$. By the same argument above, where we showed the highest degree $k$ for $p$ cannot be odd, $2k-l$ cannot be odd because it is the highest degree for $\hat p$. Therefore, $l$ must be even and $\hat p_l(x)$ is negative semi-definite. Since $\hat p_l$ share the same sign with $p_l$, $p_l$ is also negative semi-definite. 

(2) Let $\set{c_\alpha x^{\alpha}}_{\alpha \in I}$ denote the set of monomials that contain a factor \( x_i^{\alpha_i} \) whose degree $\alpha_i$ is the highest among all monomials in \( p(x) \). We can write 
$
p(x) = \sum_{\alpha \in I} c_\alpha x^{\alpha} + \sum_{\alpha \not\in I} c_\alpha x^{\alpha}.
$
Let $x_\lambda=(x_1,\ldots,\lambda x_i,\ldots, x_n)$. Then 
$
p(x_\lambda) = \lambda^{\alpha_i} \sum_{\alpha \in I} c_\alpha x^{\alpha} + \sum_{\alpha \not\in I} c_\alpha (x_\lambda)^{\alpha}. 
$
Given any $x$ such that $\sum_{\alpha \in I} c_\alpha x^{\alpha}\neq 0$, $p(x_\lambda)$ can be viewed as a polynomial of $\lambda$, whose highest degree is $\alpha_i$. Since we must have $p(x_\lambda)\le 0$ for all $\lambda\in\Real$, $\alpha_i$ must be even. 

(3) Since \( p_k(x) \) and \( p_k(x) \) must be negative semi-definite as shown in part (1), we can apply the argument in part (2) to each of them to obtain the conclusion. 

(4) Pick \( x = (0, \ldots, x_i, \ldots, 0) \). We have \( p(x) = \sum_{l\le d\le k} c_d x_i^d \), which is a polynomial in \( x_i \). By part (1), the leading terms are negative semi-definite with respect to \( x_i \). This implies that $d$ is even and $c_d\le 0$ for both $d=l$ and $d=k$.
\end{proof}

\begin{rem}
    Clearly, if $p(x)\ge 0$ for all $x\in\Real^n$, we can draw the same conclusions as that of Proposition \ref{prop:sign-definite} with ``negative semi-definiteness'' replaced by ``positive semi-definiteness'' for $p_l(x)$ and $p_k(x)$ in part (1) and $c\ge 0$ replacing $c\le 0$ in part (4). 
\end{rem}

\begin{rem}
Determining the positivity of polynomials is an NP-hard problem. For this reason, the necessary conditions and the reduction rules derived from them are definitely not exhaustive. Furthermore, when combined with other incomplete synthesis procedures (e.g., linear programming or SMT solving based on inequality constraints evaluated at sample points), we can apply additional heuristic reductions to the parameters in the Lyapunov candidate. 
\end{rem}

\section{A Sufficient Condition for LaSalle's Invariance Principle}\label{sec:lasalle}

We use the following example to show that, although LaSalle's principle is a well-known result, a reformulation of its conditions may be necessary for it to be verifiable by an SMT solver in an automated manner.

\begin{example}
    Consider 
    \begin{equation}\label{eq:ex2_sys}
    \dot x = f(x) = \begin{bmatrix}
        x_2\\
        -x_1^3 - x_2^3
    \end{bmatrix}.
    \end{equation}
    Consider the Lyapunov function $V(x)=x_1^4+2x_2^2$. We can compute 
\begin{equation}
\label{eq:dot_V}
    \dot V(x) = \nabla V(x) \cdot f(x) = - 4x_2^4. 
\end{equation}
To apply LaSalle's invariance principle, we aim to show that the largest invariant set contained in \(\set{x\in\Real^2 :\, \dot V(x)=0} = \set{-4x_2^4 = 0}\) is \(\set{0}\). While it is straightforward to see in this particular example that the set \(\set{x \in \Real^2 :\, \dot V(x) = 0}\) is simply \(\set{x_2 = 0}\), we cannot generally expect to solve for this set symbolically. Therefore, we proceed with the analysis using the set \(\set{-4x_2^4 = 0}\). Differentiating $\dot V = -4x_2^4$ along solutions of (\ref{eq:ex2_sys}), we obtain 
\begin{equation}
\label{eq:ddot_V}
\ddot V(x) = \nabla (\dot V) \cdot f(x) =  -16 x_2^3 (-x_1^3 - x_2^3).     
\end{equation}
Intuitively, if 
\begin{equation}
\label{eq:one_step_lassale}
\dot V(x) = 0\quad \text{and} \quad x\neq 0 \Longrightarrow \ddot V(x)\neq 0,
\end{equation}
then the largest invariant set in $\set{\dot V = 0}$ must be $\set{0}$. This is because a solution $\phi(t,x)$ of (\ref{eq:ex2_sys}) with $\ddot V(x) \neq 0$ will necessarily leave the set $\set{\dot V = 0}$ instantly. In view of (\ref{eq:dot_V}) and (\ref{eq:ddot_V}), we do not have (\ref{eq:one_step_lassale}) hold. Nonetheless, we can continue with the same reasoning and seek to verify that there exists some integer $r \ge 1$ such that
\begin{equation}
\label{eq:multi_step_lassale}
L_f^k(x) = 0, \quad \forall k<r, \quad \text{and} \quad x\neq 0 \Longrightarrow L_f^r(x)\neq 0,
\end{equation}
where $L_f^k(x)$ denotes the $k$-th order derivatives of $V$ along solutions (formally defined in (\ref{eq:lie_derivative}) below). If this can be verified for some $r \ge 1$, then we know the largest invariant set in $\set{\dot V = 0}$ must be $\set{0}$ (see Proposition \ref{prop:lasalle} for a formal proof), and we can use LaSalle's invariance principle to conclude global asymptotic stability. For the example above, one can see that $L_f^5 V$ will expose a term that only involves $x_1^3$, while the rest of the terms all have a factor of $x_2$, which would imply (\ref{eq:multi_step_lassale}) holds. In fact, we can verify that $L_f^1(x) = \dot V(x) = 0$ and $x \neq 0$ imply $x_2 = 0$ and $x_1 \neq 0$, which in turn imply $L_f^5 V(x) \neq 0$. This provides the idea for the technical result below.
\end{example}

Consider system (\ref{eq:sys}) and a sufficiently smooth function $V:\, \Real^n \rightarrow \Real$. Define the various order Lie derivatives of $V$ with respect to (\ref{eq:sys}) as
\begin{equation}\label{eq:lie_derivative}
\begin{aligned}
    L_f^1 V(x) & = L_f^1 V(x) = \nabla V(x)\cdot f(x),\\
    L_f^{k+1} V(x) & = \nabla (L_f^k V(x) ) \cdot f(x),\quad k\ge 1.
\end{aligned}
\end{equation}
Introduce the sets
\begin{equation}\label{eq:C}
\begin{aligned}
    \mathcal C^1 & = \set{x\in\Real^n:\,L_f V(x)=0},\\
    \mathcal C^{k+1}  & = \set{x\in\Real^n:\,L_f^{k+1} V(x)=0} \cap \mathcal C^k,\quad k\ge 1.
\end{aligned}
\end{equation}

The following result provides a sufficient condition for applying LaSalle's invariance principle to prove global asymptotic stability. 

\begin{proposition}\label{prop:lasalle}
Suppose that there exists a sufficiently smooth, positive definite, and radially unbounded function $V:\, \Real^n \rightarrow \Real$ such that $\dot V(x)\le 0$ for all $x\in\Real^n$. Furthermore, there exists some integer $r\ge 1$ such that the sets defined in (\ref{eq:C}) satisfy
\begin{equation}\label{eq:Ck}
    \bigcap_{k=1}^r \mathcal C^{k} = \set{0},
\end{equation}
then the origin is globally asymptotically stable for (\ref{eq:sys}). 
\end{proposition}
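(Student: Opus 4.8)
The plan is to verify the three hypotheses of LaSalle's invariance principle (Theorem \ref{thm:lasalle}) and then invoke it directly. Conditions (1) and (2) of that theorem are precisely the positive definiteness and radial unboundedness of $V$ together with the sign condition $\dot V\le 0$, all of which are assumed. The entire content of the proof therefore lies in establishing condition (3): that the largest invariant set $M$ contained in $\mathcal{C}^1=\set{x\in\Real^n:\,\dot V(x)=0}$ equals $\set{0}$. I would first note that the sets in (\ref{eq:C}) are nested, $\mathcal{C}^{k+1}\subseteq\mathcal{C}^k$, so that $\bigcap_{k=1}^r \mathcal{C}^k=\mathcal{C}^r=\set{x\in\Real^n:\,L_f^k V(x)=0,\ 1\le k\le r}$, and the hypothesis (\ref{eq:Ck}) reads simply $\mathcal{C}^r=\set{0}$.

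Next I would take an arbitrary $x_0\in M$ and exploit invariance: because $M$ is invariant and contained in $\mathcal{C}^1$, the solution satisfies $\phi(t,x_0)\in\mathcal{C}^1$, hence $L_f V(\phi(t,x_0))=0$, for all $t\ge 0$. The key observation is that differentiating a Lie derivative along the flow raises its order: by the chain rule and (\ref{eq:lie_derivative}),
\[
\frac{d}{dt}\,L_f^k V(\phi(t,x_0)) = \nabla(L_f^k V)(\phi(t,x_0))\cdot f(\phi(t,x_0)) = L_f^{k+1} V(\phi(t,x_0)).
\]
Since $t\mapsto L_f V(\phi(t,x_0))$ is identically zero, so is its time derivative, which gives $L_f^2 V(\phi(t,x_0))\equiv 0$; iterating this bootstrap proves by induction that $L_f^k V(\phi(t,x_0))=0$ for all $t\ge 0$ and all $1\le k\le r$. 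Evaluating at $t=0$ yields $x_0\in\mathcal{C}^r=\set{0}$, so $x_0=0$. As $x_0\in M$ was arbitrary, $M=\set{0}$, and Theorem \ref{thm:lasalle} then delivers global asymptotic stability.

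The genuinely delicate points here are bookkeeping rather than conceptual. First, the repeated differentiation requires $V$ to be smooth enough that $L_f^1 V,\ldots,L_f^r V$ are all defined and the intermediate ones continuously differentiable; since $f$ is polynomial (hence $C^\infty$), this is guaranteed by the standing assumption that $V$ is sufficiently smooth (concretely, $V\in C^r$ suffices). Second, and most importantly, I must use that invariance of $M$ forces the \emph{entire} orbit through $x_0$ to remain in $\mathcal{C}^1$ — it is not enough that the single point $x_0$ lies in $\mathcal{C}^1$. This is exactly what upgrades the one algebraic condition $L_f V=0$ into the whole chain $L_f^k V=0$ through successive differentiation, and I expect this conversion of orbit-invariance into the vanishing of all higher Lie derivatives to be the crux of the argument. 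By contrast, the boundedness of trajectories that LaSalle's principle needs comes for free from radial unboundedness of $V$ together with $\dot V\le 0$ (sublevel sets are compact and forward invariant), and is already subsumed in the statement of the cited Theorem \ref{thm:lasalle}.
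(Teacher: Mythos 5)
Your proof is correct, but it runs in the opposite logical direction from the paper's. Both of you reduce the problem to verifying condition (3) of Theorem \ref{thm:lasalle} and both rely on the same key identity $\tfrac{d}{dt}L_f^k V(\phi(t,x)) = L_f^{k+1}V(\phi(t,x))$, yet the mechanisms differ. The paper argues by instantaneous escape: for $x\in\mathcal C^1\setminus\set{0}$ it picks the smallest $l\le r$ with $L_f^{l}V(x)\neq 0$, then iterates a local monotonicity argument ($L_f^{l-1}V(\phi(t,x))$ starts at zero with nonvanishing derivative of constant sign, hence is nonzero on some $(0,\tau_1]$, and so on down the chain to $\dot V$) to conclude the solution exits $\set{\dot V=0}$ immediately, so no nonzero point can lie in an invariant subset of $\mathcal C^1$. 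You instead start from a point $x_0$ of the largest invariant set $M$, use invariance to obtain the identity $L_f V(\phi(t,x_0))\equiv 0$, and bootstrap the vanishing upward by repeated differentiation, landing at $x_0\in\mathcal C^r=\set{0}$. Your direct version is arguably cleaner: it dispenses with the choice of $l$, the cascade of $\tau_i$'s, and the sign-constancy bookkeeping that the paper's ``continuing this argument'' step implicitly requires, and it handles $r=1$ uniformly, whereas the paper splits off the case $\mathcal C^1=\set{0}$ and falls back on Theorem \ref{thm:lyap}. What the paper's formulation buys is the explicit dynamical picture (trajectories leave $\set{\dot V = 0}$ instantly), which directly motivates the iterative SMT encodings (\ref{eq:lasalle}) and (\ref{eq:lasalle2}). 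Your side remarks are also accurate: forward invariance of $M$ suffices, $V\in C^r$ is enough smoothness since $f$ is polynomial, completeness of forward solutions follows from radial unboundedness together with $\dot V\le 0$, and the nesting $\mathcal C^{k+1}\subseteq\mathcal C^k$ correctly collapses the hypothesis (\ref{eq:Ck}) to $\mathcal C^r=\set{0}$.
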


\begin{proof}
   If $\mathcal{C}^1=\set{0}$, GAS follows from Theorem \ref{thm:lyap}. Now suppose that $\mathcal{C}^1\neq \set{0}$. 
   To apply LaSalle's invariance principle, we aim to show that, for any $x\in \mathcal C^1\setminus\set{0}$, the solution $\phi(t,x)$ cannot stay indefinitely in $\mathcal C^1$. Let $l\in \set{2,\ldots, r}$ be the smallest integer such that $x\not\in \mathcal C^l$. Such an $l$ exists because $x\neq 0$ and (\ref{eq:Ck}) holds. Then we have $L_f^{k}V(x)=0$ for $k=1,\ldots, l-1$ and $L_f^{l}V(x)\neq 0$. As a result, there exists a small $\tau_1>0$ such that $L_f^{l-1}V(\phi(t,x))\neq 0$ for all $t\in(0,\tau_1]$. Continuing this argument leads to the existence of a small $\tau_{l-1}>0$ such that $\dot V(\phi(t,x))\neq 0$ for all $t\in(0,\tau_{l-1}]$, which implies $\phi(t,x)$ must leave $\mathcal C^1$. Hence, the largest invariant set within the set $\mathcal C^1$ is $\set{0}$. By LaSalle's invariance principle (Theorem \ref{thm:lasalle}), the origin is GAS. 
\end{proof}

\begin{rem}
We note that similar conditions for computing the limit invariant set and synthesizing relaxed (weak) Lyapunov functions using LaSalle's principle have been derived in \cite{liu2012automatically,gerbet2020proving}. In particular, \cite{gerbet2020proving} (Proposition 1 and the remarks following Proposition 3) notes that the finite termination of a condition analogous to (\ref{eq:Ck}) can be established using the descending chain condition for algebraic varieties, as discussed in \cite{cox2015ideals}.
\end{rem}

\section{Synthesis Algorithms}\label{sec:synthesis}

In this section, we discuss several algorithmic procedures that can be used to efficiently synthesize a polynomial Lyapunov function with the aid of SMT and optimization solvers. 

\subsection{Symbolic Reduction}\label{sec:reduction}

Based on Proposition \ref{prop:sign-definite}, we can formulate the following symbolic reduction procedure. Consider a Lyapunov function template 
\begin{equation}
\label{eq:lyap}
V(x) = \sum_{\alpha\in I} c_\alpha x^\alpha. 
\end{equation}
The set of multi-indices, $I$, specifies which monomial terms are included in the Lyapunov function candidate. For example, one may wish to include all monomial terms in a quadratic form. For higher degrees, it is common practice to retain only even-degree monomials in \( V \). For a polynomial vector field $f$, we then compute the Lie derivative as
\begin{equation}
\label{eq:lyap2}
\dot V(x) = \nabla V(x)\cdot f(x) = \sum_{\alpha \in \hat I} \hat c_\alpha x^\alpha. 
\end{equation}
The possible terms that appear in $\hat I$ are obviously determined by that of $f$ and $V$. For $V$ to be a valid global Lyapunov function candidate (for both Lyapunov and LaSalle conditions), we require at least $\dot V(x)\le 0$ for all $x\in\Real^n$. Therefore, according to Proposition \ref{prop:sign-definite}, certain terms cannot appear. We can use the result of Proposition \ref{prop:sign-definite} to introduce symbolic equality and inequality constraints on $\set{\hat c_\alpha}_{\alpha \in}$ and solve them to reduce the number of terms in $V$. 

We revisit Example \ref{ex:ex1} to illustrate the procedure.

\begin{example}[Example \ref{ex:ex1} revisited]\label{ex:ex1:ex1}
The Lyapunov candidate is  
$$
V(x) = c_0x_1^2 + c_1 x_2^2 + c_2x_1x_2.
$$
and the computed Lie derivative \(\dot{V}\) is 
$$
(2c_0 - 2c_1)x_1^6 x_2 - c_2 x_1^7 + c_2 x_1^5 x_2^2 - 2c_0 x_1^4 - 2c_1 x_2^4  - c_2 x_1^3 x_2 - c_2 x_1 x_2^3. 
$$
By Proposition \ref{prop:sign-definite}(1), for \(\dot{V}\) to be negative semi-definite, the highest-degree terms must be even. Setting \(2c_0 - 2c_1=0\), \(- c_2=0\), and \(c_2=0\) yields \(c_0=c_1\) and \(c_2=0\). This results in the only valid quadratic Lyapunov function candidate, \(V(x)=c_1(x_1^2 + x_2^2)\).
\end{example}

\subsection{Complete Synthesis}

Given a Lyapunov function candidate $V(x) = \sum_{\alpha\in I} c_\alpha x^\alpha 
$, possibly after symbolic reduction outlined in Section \ref{sec:reduction}, we propose the following SMT procedures to synthesize global Lyapunov functions for GAS either using a strict Lyapunov function or a weak Lyapunov function by LaSalle's invariance principle. We also discuss how to synthesize a Lyapunov function to prove instability. The procedure is said to be complete in the sense that if the formulated SMT conditions encoding the Lyapunov conditions on a particular Lyapunov function template are satisfiable, then a Lyapunov function can be found. If they are not satisfiable, then a Lyapunov function of this particular form does not exist to satisfy the given SMT conditions. 

\subsubsection{Strict Lyapunov function:} 

The satisfiability problem for finding a strict Lyapunov function can be formulated as follows:
\begin{equation}
    \label{eq:smt_complete}
\exists \{c_\alpha\} ( (\forall x (x \neq 0 \Rightarrow  (V(x) > 0 \wedge \nabla V(x) \cdot f(x) < 0)))).
\end{equation}
For a polynomial Lyapunov function \( V \) and a polynomial vector field \( f \), the SMT solver Z3 \cite{de2008z3} can readily solve this problem. If the satisfiability check is successful, a ``model'' — an assignment of values to the coefficients \( \{c_i\} \) that satisfies all constraints — can be extracted, defining a valid Lyapunov function. By Theorem \ref{thm:lyap}, this proves GAS of (\ref{eq:sys}), provided that $V$ is also radially unbounded. 

\subsubsection{Weak Lyapunov function and GAS using LaSalle's principle:}

If a strict Lyapunov function cannot be found, GAS can still be established using LaSalle's invariance principle. In this case, the SMT procedure relaxes the conditions to search for a weak Lyapunov function \( V \) where:
\begin{equation}
    \label{eq:smt_weak}
\exists \{c_\alpha\} ( (\forall x (x\neq 0 \Rightarrow (V(x) > 0 \wedge \nabla V(x) \cdot f(x) \leq 0)))).
\end{equation}
Additionally, we use Proposition \ref{prop:lasalle} to formulate the following LaSalle's condition:
\begin{equation}
    \label{eq:lasalle}
\forall x ( (\nabla V \cdot f(x) = 0 \wedge x\neq 0) \Rightarrow \vee_{2\le k\le r} (L_f^k (x) \neq 0)).
\end{equation}
Here, $r$ defines the number of Lie derivatives we are checking LaSalle's condition with. Note that (\ref{eq:lasalle}) is equivalent to (\ref{eq:Ck}). A simpler sufficient condition for (\ref{eq:lasalle}) is 
\begin{equation}
    \label{eq:lasalle2}
\forall x ( (\nabla V \cdot f(x) = 0 \wedge x\neq 0) \Rightarrow  L_f^r (x) \neq 0),
\end{equation}
for some $r\ge 2$. Both (\ref{eq:lasalle}) and (\ref{eq:lasalle2}) can be easily coded in an iterative loop to check if a Lyapunov function of a particular form can be found using LaSalle's condition up to a degree-$r$ Lie derivative. 

\begin{rem}
There is a choice to be made regarding whether to code the LaSalle conditions directly in the synthesis part, or to first synthesize a weak Lyapunov function satisfying (\ref{eq:smt_weak}), and then verify the LaSalle condition (\ref{eq:lasalle}) or (\ref{eq:lasalle2}). Verification is presumably cheaper than synthesis, but synthesis can be comprehensive in determining whether a weak Lyapunov function of a particular exists that satisfies a LaSalle-type condition, whereas a weak Lyapunov function synthesized without the LaSalle condition may not always verify the LaSalle condition. 
\end{rem}

\subsubsection{Lyapunov function for instability}

According to Theorem \ref{thm:instability}, if the following condition: 
\begin{equation*}
    \exists \{c_\alpha\} (( \forall x ( V(x) \leq 0 \Rightarrow \nabla V(x) \cdot f(x) \leq 0 )\wedge \exists z (V(z) < V(0))))
\end{equation*}
is satisfiable, then we can use $V$ to prove that the origin is not globally asymptotically stable for (\ref{eq:sys}). Even though the condition does not fully require \(\dot{V}\) to be negative semi-definite, we can still use the symbolic procedure outlined in Section \ref{sec:reduction} to potentially reduce the number of decision variables in SMT synthesis. This turns out to be highly effective, and we demonstrate in Section \ref{sec:examples} that we can efficiently disprove GAS of non-trivial examples up to 10 dimensions in under a second for each example.

\subsection{Sampling-Based Partial Synthesis}

In contrast to the complete synthesis method discussed in the previous section, one can also use a sampling-based partial synthesis approach to compute a candidate Lyapunov function and then formally verify it with respect to the Lyapunov conditions. This approach is especially useful when exact SMT synthesis may be computationally prohibitive. Sampling-based approaches are usually combined with a counterexample-guided iterative synthesis (CEGIS) approach \cite{ravanbakhsh2019learning,ahmed2020automated} to synthesize valid Lyapunov functions. 

\subsubsection{LP-CEGIS}

The Lyapunov condition (\ref{eq:smt_complete}) depends linearly on the coefficients $\set{c_\alpha}$. Instead of solving $\set{c_\alpha}$ such that (\ref{eq:smt_complete}) holds for all $x\in\Real^n$, we can sample a number of points $\set{y_j}_1^N\in\Real^n\setminus\set{0}$ and use them formulate $2N$ linear constraints in $\set{c_\alpha}$:
\begin{equation}
    \label{eq:LP_constraints}
    \begin{aligned}
\sum_{\alpha \in I} c_\alpha (y_j)^\alpha > 0 ,\quad j=1,\ldots, N, \\
\sum_{i=1}^n \sum_{\alpha \in I} c_\alpha \alpha_i y_j^{\alpha - e_i} f_i(y_j) < 0,\quad j=1,\ldots, N,
    \end{aligned}
\end{equation}
where $e_i = (0, 0, \ldots, 1, \ldots, 0)$. The purpose of writing this down is simply to illustrate that these constraints are indeed linear in \(\{c_\alpha\}\). In practice, the evaluation and formulation of these constraints can be automated. However, solving (\ref{eq:LP_constraints}) directly with linear programming (LP) will very likely return \(c_\alpha = 0\) for all \(\alpha \in I\). To obtain nontrivial solutions, we need to reformulate the problem. We implemented the following as a benchmark for solving the examples presented in this paper: 
\begin{equation}
    \label{eq:LP_constraints_2}
    \begin{aligned}
\sum_{\alpha \in I} c_\alpha (y_j)^\alpha \ge \mu \min(|y_j|^{l_V}, |y_j|^{k_V}) ,\quad j=1,\ldots, N, \\
\sum_{i=1}^n \sum_{\alpha \in I} c_\alpha \alpha_i y_j^{\alpha - e_i} f_i(y_j) \le - \mu\min(|y_j|^{l_{\dot V}}, |y_j|^{k_{\dot V}}),\quad j=1,\ldots, N,
    \end{aligned}
\end{equation}
where $\mu>0$ is a small constant, \( l_V \) and \( l_{\dot V} \) are the lowest degrees of monomials in \( V \) and \( \dot V \), respectively, and \( k_V \) and \( k_{\dot V} \) are the highest degrees. We can solve (\ref{eq:LP_constraints_2}) to obtain a Lyapunov candidate. While this may not yield a valid Lyapunov function, one can use an SMT solver to further verify the strict Lyapunov condition (\ref{eq:smt_complete}) or the weak Lyapunov condition (\ref{eq:smt_weak}), along with the LaSalle condition (\ref{eq:lasalle}) and or (\ref{eq:lasalle2}). If a counterexample is found during verification of \( V \), this counterexample can be added to the set of samples to re-solve for \( \{c_\alpha\} \). This procedure is referred to as LP-CEGIS.

\subsubsection{Z3-CEGIS}

Instead of using an LP solver, one can also directly use an SMT solver to solve for \(\{c_\alpha\}\) from the constraints (\ref{eq:LP_constraints}) provided by a finite set of samples, although this approach is potentially more expensive than solving an LP. When the SMT solver is Z3, we refer to this as Z3-CEGIS. Notably, both LP-CEGIS and Z3-CEGIS were previously proposed in \cite{ahmed2020automated} for the synthesis of Lyapunov functions. We implement these as baselines for the numerical case studies in this paper.

\subsection{LP-CEGIS-SR and Z3-CEGIS-SR}\label{sec:SR}

We revisit Example \ref{ex:ex1} (and Example \ref{ex:ex1:ex1}) to illustrate the necessity of using symbolic reduction when applying a CEGIS procedure. We prove the following claim.

\begin{claim}\label{claim}
Consider Example \ref{ex:ex1} and a quadratic Lyapunov function candidate $
V(x) = c_0x_1^2 + c_1 x_2^2 + c_2x_1x_2
$. For any finite set of samples $\set{y_j}_1^{N}$, there exist $\set{c_0,c_1,c_2}$ such that 
$c_0\neq c_1$ and $V$ satisfies both (\ref{eq:LP_constraints}) and (\ref{eq:LP_constraints_2}). 
\end{claim}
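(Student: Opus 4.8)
The plan is to show that the finite system of sample constraints is strictly feasible at a point lying on the hyperplane $\set{c_0=c_1}$, and then to perturb off that hyperplane while preserving strict feasibility. First I would fix a \emph{base} candidate $(c_0,c_1,c_2)=(C,C,0)$ for a constant $C>0$ to be chosen, corresponding to $V(x)=C(x_1^2+x_2^2)$. For this candidate the Lie derivative collapses to $\dot V(x)=-2C(x_1^4+x_2^4)$, so that $V(y_j)=C\abs{y_j}^2>0$ and $\dot V(y_j)=-2C(y_{j,1}^4+y_{j,2}^4)<0$ at every sample point $y_j\neq 0$. Hence the strict constraints (\ref{eq:LP_constraints}) hold at the base point for any $C>0$.

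Next I would verify that (\ref{eq:LP_constraints_2}) can be made to hold \emph{strictly} at the base point by taking $C$ large. Here $l_V=k_V=2$, $l_{\dot V}=4$, and $k_{\dot V}=7$, so the two right-hand sides are $\mu\abs{y_j}^2$ and $-\mu\min(\abs{y_j}^4,\abs{y_j}^7)$. Using $\abs{y_j}^2=y_{j,1}^2+y_{j,2}^2$ and the elementary bound $\min(\abs{y_j}^4,\abs{y_j}^7)\le\abs{y_j}^4\le 2(y_{j,1}^4+y_{j,2}^4)$, the first inequality reduces to $C\abs{y_j}^2>\mu\abs{y_j}^2$ and the second to $2C(y_{j,1}^4+y_{j,2}^4)>\mu\min(\abs{y_j}^4,\abs{y_j}^7)$, both of which hold with strict slack once $C>\mu$. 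Thus for such $C$ the base point $(C,C,0)$ satisfies all $4N$ constraints of (\ref{eq:LP_constraints}) and (\ref{eq:LP_constraints_2}) with strict inequality.

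Finally I would invoke continuity and openness. Each of the finitely many constraints is a linear, hence continuous, function of $(c_0,c_1,c_2)$ evaluated at a fixed sample. Since every one holds with strict inequality at $(C,C,0)$, the feasible set is open and contains an open ball $B\subset\Real^3$ about $(C,C,0)$. An open ball in $\Real^3$ cannot be contained in the hyperplane $\set{c_0=c_1}$, so $B$ contains a point with $c_0\neq c_1$; concretely one may take $(C+\delta,C,0)$ for sufficiently small $\delta>0$, and any such point proves the claim.

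The main obstacle is not the perturbation itself but securing \emph{strict} satisfaction of the non-strict constraints (\ref{eq:LP_constraints_2}), since only strict slack at the base point leaves room to move off the hyperplane. This is exactly why I would scale the base Lyapunov function by a large $C$ rather than use the normalized $x_1^2+x_2^2$, which leaves guaranteed margin only when $\mu<1$; scaling removes any dependence on the size of $\mu$. Everything else is a routine continuity argument, and it makes transparent why no finite sample set can force the exact relation $c_0=c_1$ that holds only when the inequalities are imposed for all $x\in\Real^n$.
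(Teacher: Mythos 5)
Your proposal is correct and follows essentially the same route as the paper's proof: both set $c_2=0$, start from the diagonal $c_0=c_1$ where $\dot V=-2C(x_1^4+x_2^4)$ gives strict slack in all finitely many sample constraints, and then perturb $c_0$ slightly away from $c_1$. Your continuity/openness packaging, with the explicit scaling $C>\mu$ to secure strict slack in the non-strict constraints (\ref{eq:LP_constraints_2}), is if anything a slightly more careful rendering of the paper's argument that each constraint takes the form $(2c_0-2c_1)b_j\le d_j$ with $d_j>0$.
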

\begin{proof}
Recall that the Lie derivative of $V$ is 
$$
(2c_0 - 2c_1)x_1^6 x_2 - c_2 x_1^7 + c_2 x_1^5 x_2^2 - 2c_0 x_1^4 - 2c_1 x_2^4  - c_2 x_1^3 x_2 - c_2 x_1 x_2^3. 
$$
We can simply let $c_2=0$. The positive definiteness constraints are always satisfied, provided that $c_0$ and $c_1$ are sufficiently large positive values relative to $\mu$. We have  
$$
\dot V(x) = (2c_0 - 2c_1)x_1^6 x_2 - 2c_0 x_1^4 - 2c_1 x_2^4. 
$$
For any sample \( y_j = (y_{j1}, y_{j2}) = 0 \), the constraints (\ref{eq:LP_constraints}) and (\ref{eq:LP_constraints_2}) are trivially satisfied. Hence, assume, without loss of generality, that \( y_j \neq 0 \) for all \( j = 1, \ldots, N \). For sufficiently large \( c_0 \) and \( c_1 \), each constraint from (\ref{eq:LP_constraints}) and (\ref{eq:LP_constraints_2}) takes the form \( (2c_0 - 2c_1) b_j \le d_j \), where \( d_j > 0 \). As a result, we can choose \( c_0 \) to be sufficiently close, but not equal, to \( c_1 \) to satisfy all the constraints.
\end{proof}

By Proposition \ref{prop:sign-definite} in Section \ref{sec:reduction}, any \( c_0 \neq c_1 \) makes \( V \) an invalid Lyapunov function. Adding further counterexamples will not help either, due to the claim above. This simple analysis reveals that symbolic reduction, as discussed in Section \ref{sec:reduction}, can play a crucial role in identifying and eliminating algebraic constraints before optimization and satisfiability solvers are effectively deployed for Lyapunov function synthesis.
We observe in our implementations that CEGIS (without symbolic reduction and with exact arithmetic) fails to converge on this example. 

\begin{rem}\label{rem:rounding}
We note, however, that rounding the solutions returned by LP or SMT solvers can mitigate the issue raised in Claim \ref{claim} because, in the analysis above, we would need to choose \( c_0 \) and \( c_1 \) sufficiently close with increasing number of samples. Nonetheless, this approach would be ad hoc and akin to how we filter out small coefficients returned by SOS programming. Hence, we believe that symbolic reduction is a more principled way of handling these constraints. Furthermore, an obvious benefit of using symbolic reduction is to reduce the number of decision variables in optimization or SMT synthesis. 
\end{rem}

\begin{rem}
Intuitively, symbolic reduction aims (albeit without guarantees) to reduce \( V \) to the form \( V(x) = \sum_{i=1}^N c_i p_i(x) \), where each \( p_i(x) \) is a polynomial, such that the set  
$$
\Theta = \{c = (c_1, \ldots, c_N) : V \text{ satisfies (\ref{eq:smt_weak})}\}
$$  
has a positive measure in \( \mathbb{R}^N \). If this condition holds, it is expected that computing \( c \) can be achieved more robustly.
\end{rem}

In the next section, we refer to LP-CEGIS and Z3-CEGIS with symbolic reduction as \textbf{LP-CEGIS-SR} and \textbf{Z3-CEGIS-SR}, respectively. We refer to complete synthesis with symbolic reduction as Z3-Complete-SR. We encode the LaSalle condition (\ref{eq:lasalle2}) for synthesis in Z3-Complete-SR-LaSalle and for verification of weak Lyapunov functions in \textbf{LP-CEGIS-SR-LaSalle} and \textbf{Z3-CEGIS-SR-LaSalle}.

\begin{rem}
We omit the discussion of training polynomial Lyapunov functions using gradient descent in this paper. To do so, one can use a quadratic activation function in a multi-layer neural network and train it with a loss function that encourages the inequality constraints (\ref{eq:LP_constraints}) or (\ref{eq:LP_constraints_2}) to be satisfied; see, e.g., \cite{edwards2024fossil,liu2024tool}. A typical gradient descent (GD) optimization algorithm can be called to minimize the loss. This approach can also be combined with a CEGIS procedure as implemented in \cite{edwards2024fossil}. Such an approach can be called GD-CEGIS or NN-CEGIS (NN stands for neural networks). We implemented this, but the results are not comparable with LP-CEGIS, and it suffers from the same drawbacks as LP-CEGIS and Z3-CEGIS, where counterexamples cannot be exhausted due to certain underlying algebraic constraints (stipulated by Proposition \ref{prop:sign-definite} and illustrated by Claim \ref{claim}) that are not eliminated before optimization/SMT solvers are deployed.
\end{rem}

\section{Case Studies and Experiments}\label{sec:examples}

In this section, we revisit a suite of examples from related work \cite{papachristodoulou2002construction,ahmed2020automated,sankaranarayanan2013lyapunov} and introduce new examples to illustrate the main results of this paper. In particular, we contrast different approaches: Z3-Complete-SR, LP-CEGIS-SR, Z3-CEGIS-SR, and the baseline LP-CEGIS and Z3-CEGIS without symbolic reduction. We also include results obtained using SOSTOOLS \cite{papachristodoulou2013sostools} for comparison. All computations were performed on an Intel Xeon Gold 6326 CPU @ 2.90 GHz with 32 cores and 16 GB of RAM on a computing cluster. Implementations are done using the LyZNet toolbox \cite{liu2024tool}.

\paragraph{Data Availability.} The models, scripts, and tools to reproduce our experimental evaluation are archived and publicly available at DOI~\href{https://doi.org/10.5281/zenodo.15272621}{10.5281/zenodo.15272621}.

A more detailed description of the examples, along with the Lyapunov functions successfully synthesized using different approaches, is provided in the Appendix. 

We run experiments on a total of 10 examples, which are polynomial vector fields of dimensions 2 to 6, labeled as (E1)–(E10) and listed below:
\[
(\textbf{E1})\, \dot{x} = \begin{bmatrix} -x_1^3 + x_1^5 x_2 \\ -x_2^3 - x_1^6 \end{bmatrix},\;
(\textbf{E2})\, \dot{x} = \begin{bmatrix} -x_1^7 + x_1 x_2 \\ -x_2^7 - x_1^2 \end{bmatrix},\;
(\textbf{E3})\, \dot{x} = \begin{bmatrix} -x_1 - 1.5 x_1^2 x_2^3 \\ -x_2^3 + 0.5 x_1^3 x_2^2 \end{bmatrix},
\]
\[
(\textbf{E4})\, \dot{x} = \begin{bmatrix} -x_1^3 + x_2 \\ -x_1 - x_2 \end{bmatrix},\;
(\textbf{E5})\, \dot{x} = \begin{bmatrix} -\sigma x_1 + \sigma x_2 \\ r x_1 - x_2 - x_1 x_3 \\ -b x_3 + x_1 x_2 \end{bmatrix},\,\sigma=10,\,r=0.9999,\,b=\frac83,
\]
\[
(\textbf{E6})\, \dot{x} = \begin{bmatrix} -x_1 + x_2^3 - 3 x_3 x_4 \\ -x_1 - x_2^3 \\ x_1 x_4 - x_3 \\ x_1 x_3 - x_4^3 \end{bmatrix},\;
(\textbf{E7})\, \dot{x} = \begin{bmatrix} 
-x_1^3 + 4 x_2^3 - 6 x_3 x_4 \\ 
-x_1 - x_2 + x_5^3 \\ 
x_1 x_4 - x_3 + x_4 x_6 \\ 
x_1 x_3 + x_3 x_6 - x_4^3 \\ 
-2 x_2^3 - x_5 + x_6 \\ 
-3 x_3 x_4 - x_5^3 - x_6 
\end{bmatrix}.
\]
\[
(\textbf{E8})\, \dot{x} = \begin{bmatrix} x_2 \\ -x_1^3 - x_2^3 \end{bmatrix},
\;
(\textbf{E9})\, \dot{x} = \begin{bmatrix} x_2 \\ -x_1^5 - 3 x_2 \end{bmatrix},
\;
(\textbf{E10})\, \dot{x} = \begin{bmatrix} x_2 \\ -x_1 - 7 x_2^5 \end{bmatrix}.
\]
The main objective is to assess whether symbolic reduction can improve the likelihood of discovering a valid Lyapunov function. Results, including computational times, for (E1)–(E7) are reported in Table~\ref{tab:stability}, and for (E8)--(E10) in Table~\ref{tab:lasalle}. 

For sampling-based approaches, the initial domain for all examples was set to $[-10,10]^n$, but subsequent samples (counterexamples) are drawn from $\mathbb{R}^n$. We selected $3{,}000$ initial samples for LP-CEGIS/LP-CEGIS-SR and $300$ for Z3-CEGIS/Z3-CEGIS-SR, as the latter is considerably slower and significantly impacted by the number of constraints and decision variables. We do a total of 10 CEGIS steps for all iterative synthesis approaches.  While LP\footnote{For the examples in this paper, we solved LP using \texttt{scipy.optimize.linprog} \cite{virtanen2020scipy} with the HiGHS solver \cite{huangfu2023highs}.} 
 is solved with floating-point arithmetic, we convert it\footnote{We use \texttt{symp.Rational} on the solution returned by LP, rounding it to a specified precision, and apply the same process to the solution returned by Z3. We observed that rounding has two benefits: first, it mitigates the issue of unsolved constraints (see Remark \ref{rem:rounding}), and second, it provides a more interpretable expression of the Lyapunov functions.} to exact rational arithmetic when extracting the Lyapunov function expression for verification. For solving with SOSTOOLS \cite{papachristodoulou2013sostools}, we use the build-in function $\texttt{findlyap}$ with the default solver, SeDuMi \cite{sturm1999using}.

Results in Table~\ref{tab:stability} show that incorporating symbolic reduction clearly increases the likelihood of discovering a valid Lyapunov function in most cases. Furthermore, Table~\ref{tab:lasalle} shows that with symbolic reduction, all approaches using LaSalle's principle successfully find a Lyapunov function, whereas approaches without LaSalle fail to do so. Finally, we also conducted extensive tests to evaluate the effect of symbolic reduction on instability detection using Z3. As shown in Table \ref{tab:instability}, symbolic reduction significantly improves performance, enabling successful verification even for high-dimensional systems, while the baseline approach without reduction fails (100\% timeout) even in low dimensions.

\begin{table*}[h!]
\caption{Runtimes (in seconds) of Lyapunov function synthesis methods for the systems in Examples (E1)--(E7). Symbolic reduction clearly plays a crucial role in increasing the likelihood of discovering a valid Lyapunov function in most cases.}
\label{tab:stability}
\centering
\begin{tabular}{|l|c|c|c|c|c|c|c|}
\hline
 & Dim & Z3-Complete-SR & LP-CEGIS & LP-CEGIS-SR & Z3-CEGIS & Z3-CEGIS-SR & SOS\\ \hline
E1 & 2 & 0.226 & 0.193 & 0.075 & 4.863 & 1.402 & 3.200 \\
E2 & 2 & 0.202 & 0.168 & 0.071 & - & 1.479 & 1.566 \\
E3 & 2 & 0.245 & 0.193 & 0.101 & - & 1.483 & - \\
E4 & 2 & 0.206 & 0.167 & 0.077 & - & 1.429 & - \\
E5 & 3 & 0.357 & - & 0.170 & - & 19.090 & 1.629\\
E6 & 4 & - & - & 0.113 & - & - & - \\
E7 & 6 & - & - & 0.349 & - & - & - \\
\hline
\end{tabular}
\end{table*}

\begin{table*}[h!]
\caption{Runtimes (in seconds) of Lyapunov and LaSalle synthesis methods for the three systems in (E8)--(E10). Synthesis without the LaSalle module failed to find a valid Lyapunov function in all cases.}
\label{tab:lasalle}
\centering
\begin{tabular}{|l|c|c|c|c|c|c|c|}
\hline
& Dim & Z3-Complete-SR-LaSalle & LP-CEGIS-SR-LaSalle & Z3-CEGIS-SR-LaSalle \\ \hline
E8 & 2 & 0.412 & 0.291 & 1.449 \\
E9 & 2 &  0.097 & 0.135 & 1.382 \\
E10 & 2 & 1.032 & 0.609 & 1.771 \\
\hline
\end{tabular}
\end{table*}

\begin{table*}[h!]
\caption{Experiment results for ``speedy'' instability tests using the SMT solver Z3: A total of 100 systems is randomly generated for each case, with a prescribed maximum degree for the monomials. The number of monomial terms for each dimension is set to 3. Systems that are trivially unstable, where the linearization has an eigenvalue with a positive real part, are eliminated from the trials. We also exclude vector fields that contain a zero component, as they cannot be asymptotically stable. A timeout (t.o.) of 1 second is applied to each call of Z3-complete-SR for stability or instability tests. The table includes the average synthesis time for stable and unstable cases, encompassing both SMT solver time and the overhead for symbolic reduction. Notably, a considerable number of systems can be efficiently proven to be not globally asymptotically stable in under 1 second, even for systems up to dimension 10.}

\begin{tabular}{lllllllll}
\toprule
\multicolumn{2}{c|}{} & \multicolumn{3}{c|}{Unstable} & \multicolumn{3}{c|}{Stable} & \\
\cmidrule(r){3-5} \cmidrule(r){6-8}
Dim ($n$) & Deg & Unstable \% & t.o. (\%) & Time (s) & Stable \% & t.o. (\%) & Time (s) & No. Systems \\
\midrule
2 & 2 & 61.0 & 1.0 & 0.033 & 3.0 & 0.0 & 0.045 & 100 \\
2 & 3 & 34.0 & 7.0 & 0.033 & 1.0 & 21.0 & 0.050 & 100 \\
3 & 2 & 57.0 & 4.0 & 0.038 & 0.0 & 3.0 & 0.000 & 100 \\
3 & 3 & 38.0 & 20.0 & 0.068 & 0.0 & 48.0 & 0.000 & 100 \\
4 & 2 & 58.0 & 8.0 & 0.061 & 0.0 & 6.0 & 0.000 & 100 \\
4 & 3 & 25.0 & 21.0 & 0.075 & 0.0 & 60.0 & 0.000 & 100 \\
5 & 2 & 53.0 & 8.0 & 0.066 & 0.0 & 6.0 & 0.000 & 100 \\
5 & 3 & 27.0 & 18.0 & 0.102 & 0.0 & 58.0 & 0.000 & 100 \\
6 & 2 & 58.0 & 2.0 & 0.103 & 0.0 & 1.0 & 0.000 & 100 \\
6 & 3 & 23.0 & 24.0 & 0.158 & 0.0 & 65.0 & 0.000 & 100 \\
7 & 2 & 50.0 & 4.0 & 0.174 & 0.0 & 3.0 & 0.000 & 100 \\
7 & 3 & 23.0 & 24.0 & 0.272 & 0.0 & 70.0 & 0.000 & 100 \\
8 & 2 & 58.0 & 4.0 & 0.305 & 0.0 & 0.0 & 0.000 & 100 \\
8 & 3 & 25.0 & 9.0 & 0.379 & 0.0 & 53.0 & 0.000 & 100 \\
9 & 2 & 60.0 & 1.0 & 0.443 & 0.0 & 0.0 & 0.000 & 100 \\
9 & 3 & 24.0 & 24.0 & 0.524 & 0.0 & 61.0 & 0.000 & 100 \\
10 & 2 & 51.0 & 2.0 & 0.608 & 0.0 & 2.0 & 0.000 & 100 \\
10 & 3 & 12.0 & 21.0 & 0.718 & 0.0 & 90.0 & 0.000 & 100 \\
\bottomrule
\end{tabular}

\label{tab:instability}
\end{table*}

\section{Conclusions}\label{sec:conclusion}

In this paper, we investigated automated synthesis of global Lyapunov functions and proposed a simple set of algebraic checks that can be used to symbolically solve for parameters in a polynomial Lyapunov function candidate, before an optimization or SMT-based synthesis procedure is used to compute a global Lyapunov function. We also proposed sufficient LaSalle-type conditions that can be implemented in an SMT procedure, either as verification or synthesis, for global Lyapunov functions using a weak Lyapunov function candidate. Additionally, we encoded instability conditions and designed SMT-based procedures for disproving global asymptotic stability. Through a suite of examples, we demonstrated that the proposed symbolic reduction, LaSalle-type conditions, and instability tests allow us to efficiently solve many examples that would be otherwise unsolvable.

For future work, one could further build upon the symbolic reduction rules and encode additional algebraic necessary conditions for sign-definite polynomials. It would be interesting to investigate how the symbolic reduction procedure could be developed for non-polynomial vector fields. We expect that if the vector field is locally smooth, parts of the reduction rules may still apply. Similarly, if the vector field behaves well at infinity, similar reductions could be performed. While this paper focused on global asymptotic stability, one could also specialize to local stability, for which symbolic reduction, LaSalle's principle, and instability tests could be developed accordingly.

\newpage

\section*{Acknowledgements}

This work was supported in part by the Natural Sciences and Engineering Research Council of Canada and the Canada Research Chairs Program. Jun Liu would like to thank Alessandro Abate, Alec Edwards, and Andrea Peruffo for helpful discussions.

\bibliographystyle{splncs04}
\bibliography{refs}


\appendix

\section{Description of Examples in Case Studies}\label{app:examples}

In this section, we provide details of the examples (E1)--(E10) included in the case studies in Section \ref{sec:examples}. We also include the Lyapunov functions found using different approaches.

\textbf{(E1)} This is the model in Example \ref{ex:ex1}, i.e.,
    \begin{equation}
       \dot x = f(x) = \begin{bmatrix}
            -x_1^3+x_1^5x_2\\
               -x_2^3-x_1^6
        \end{bmatrix}.
    \end{equation}
The following Lyapunov functions are returned by different synthesis approaches:
    \begin{itemize}
        \item \textbf{Z3-Complete-SR}: \( V(x) = x_1^2 + x_2^2 \).
        \item \textbf{LP-CEGIS}: \( V(x) = 197 x_1^2/100 + 197 x_2^2/100 \).
        \item \textbf{LP-CEGIS-SR}: \( V(x) = x_1^2/100 + x_2^2/100 \). 
        \item \textbf{Z3-CEGIS} \( V(x) = 47 x_1^2/50 + 47x_2^2/50 \).
        \item \textbf{Z3-CEGIS-SR}: \( V(x) =  x_1^2 + x_2^2 \). 
    \end{itemize}
SOSTOOLS provides $V = 1.179 x_1^2 - 3.34\times 10^{-22}x_1x_2 + 1.179x_2^2$, which is a valid Lyapunov function after omitting the small term.

\textbf{(E2)} Consider the model
\[
\dot{x} = \begin{bmatrix} -x_1^7 + x_1 x_2 \\ -x_2^7 - x_1^2 \end{bmatrix}.
\]
The synthesis results are as follows: 
\begin{itemize}
    \item \textbf{Z3-Complete-SR}: \( V(x) = x_1^2 + x_2^2 \).
    \item \textbf{LP-CEGIS}: \( V(x) = x_1^2/25 + x_2^2/25 \).
    \item \textbf{LP-CEGIS-SR}: \( V(x) = x_1^2/25 + x_2^2/25 \).
    \item \textbf{Z3-CEGIS} fails to produce a valid Lyapunov function after 10 CEGIS steps.    
    \item \textbf{Z3-CEGIS-SR}: \( V(x) = x_1^2 + x_2^2 \). 
\end{itemize}
SOSTOOLS finds $V(x) = 1.637 x_1^2 - 2.097 \times 10^{-16}x_1x_2 + 1.637x_2^2$, which is a valid Lyapunov function after omitting the negligible term. 

\textbf{(E3)}
    Consider the model \cite{sankaranarayanan2013lyapunov}
\[
\dot{x} = \begin{bmatrix} -x_1 - 1.5 x_1^2 x_2^3 \\ -x_2^3 + 0.5 x_1^3 x_2^2 \end{bmatrix}.
\]
The computational results are as follows:
\begin{itemize}
    \item \textbf{Z3-Complete-SR}: \( V(x) = x_1^2/3 + x_2^2 \).
    \item \textbf{LP-CEGIS}: \( V(x) = x_1^2/20 + 3 x_2^2/20 \).
    \item \textbf{LP-CEGIS-SR}: \( V(x) = 37 x_1^2/75 + 37 x_2^2/25 \).
    \item \textbf{Z3-CEGIS} fails to produce a valid Lyapunov function after 10 CEGIS steps.    
    \item \textbf{Z3-CEGIS-SR}: \( V(x) = x_1^2/3 + x_2^2 \). 
\end{itemize}
SOSTOOLS returns \( V(x) = 0.7424x_1^2 + 1.406\times 10^{-9}x_1x_2 + 2.227x_2^2 \) and the work in \cite{sankaranarayanan2013lyapunov} gives \( V(x) = 0.2x_1^2 + x_2^2 \). Neither can be verified as a global Lyapunov function (even after removing the negligible term from the one returned by SOSTOOLS).

\textbf{(E4)} 
    Consider the model \cite{sankaranarayanan2013lyapunov}
\[
\dot{x} = \begin{bmatrix} -x_1^3 + x_2 \\ -x_1 - x_2 \end{bmatrix}.
\]
The results are as follows:
\begin{itemize}
    \item \textbf{Z3-Complete-SR}: \( V(x) = x_1^2 + x_2^2 \).
    \item \textbf{LP-CEGIS}: \( V(x) = x_1^2/20 + 3 x_2^2/20 \).
    \item \textbf{LP-CEGIS-SR}: \( V(x) = x_1^2/2 + x_2^2/2 \).
    \item \textbf{Z3-CEGIS} fails to produce a valid Lyapunov function after 10 CEGIS steps.    
    \item \textbf{Z3-CEGIS-SR}: \( V(x) = x_1^2 + x_2^2 \). 
\end{itemize}
SOSTOOLS returns \( V(x) = 1.115x_1^2 + 4.603 \times 10^{-5} x_1 x_2 + 1.116x_2^2 \), which cannot be verified by Z3; this is clearly due to numerical approximations to an exact Lyapunov function.

\textbf{(E5)}
    Consider the Lorenz system 
\[
\dot{x} = \begin{bmatrix} -\sigma x_1 + \sigma x_2 \\ r x_1 - x_2 - x_1 x_3 \\ -b x_3 + x_1 x_2 \end{bmatrix},
\]
where \( \sigma = 10 \), \( r = 0.9999 \), and \( b = 8/3 \). It it known that this system is globally asymptotically stable for $r\le 1$. We found the following Lyapunov functions using various synthesis approaches:
\begin{itemize}
    \item \textbf{Z3-Complete-SR}: \( V(x) = 10001 x_1^2 / 100000 + x_2^2 + x_3^2 \).
    \item \textbf{LP-CEGIS}: fails to produce a valid Lyapunov function after 10 CEGIS steps.
    \item \textbf{LP-CEGIS-SR}: \( V(x) = 21 x_1^2 / 10 + 421 x_2^2 / 20 + 421 x_3^2 / 20 \).
    \item \textbf{Z3-CEGIS}: times out.    
    \item \textbf{Z3-CEGIS-SR}: \( V(x) = 63033 x_1^2/100 + 31933 x_2^2/5 + 31933 x_3^2/5 \). 
\end{itemize}
SOSTOOLS returns $V(x) = 0.3186x_1x_2 + 3.167x_2^2 + 3.167x_3^2$ (after omitting small terms), which can also be verified to be a global Lyapunov function.

\textbf{(E6)}
Consider the model \cite{sankaranarayanan2013lyapunov}
\[
\dot{x} = \begin{bmatrix} -x_1 + x_2^3 - 3 x_3 x_4 \\ -x_1 - x_2^3 \\ x_1 x_4 - x_3 \\ x_1 x_3 - x_4^3 \end{bmatrix}.
\]
\textbf{LP-CEGIS-SR} successfully finds the Lyapunov function 
\[
V(x) = \frac{x_1^2}{100} + \frac{x_2^4}{50} + \frac{x_3^2}{50} + \frac{x_4^2}{100}.
\] 
Other methods either time out or fail to return a solution after exhausting the CEGIS steps. SOSTOOLS finds 
a complicated expression that is difficult to verify.

\textbf{(E7)}
    Consider the system \cite{papachristodoulou2002construction}
\[
\dot{x} = \begin{bmatrix} 
-x_1^3 + 4 x_2^3 - 6 x_3 x_4 \\ 
-x_1 - x_2 + x_5^3 \\ 
x_1 x_4 - x_3 + x_4 x_6 \\ 
x_1 x_3 + x_3 x_6 - x_4^3 \\ 
-2 x_2^3 - x_5 + x_6 \\ 
-3 x_3 x_4 - x_5^3 - x_6 
\end{bmatrix}.
\]
\textbf{LP-CEGIS-SR} successfully finds the Lyapunov function 
\[
V(x) = x_1^2/100 + x_2^4/100 + x_3^2/50 + x_4^2/100 + x_5^4/200 + x_6^2/100.
\]
Other methods either time out or fail to return a solution after exhausting the CEGIS steps. The default setting of SOSTOOLs fails to find a quadratic or quartic Lyapunov function, although a restricted template may provide a quartic Lyapunov function as reported in \cite{papachristodoulou2002construction}. 

The following examples are used to illustrate the benefits of using symbolic reduction for synthesizing Lyapunov functions via LaSalle's principle.

\textbf{(E8)--(E10)}
    Consider the following three models:
\[
(E8)\, \dot{x} = \begin{bmatrix} x_2 \\ -x_1^3 - x_2^3 \end{bmatrix},
\;
(E9)\, \dot{x} = \begin{bmatrix} x_2 \\ -x_1^5 - 3 x_2 \end{bmatrix},
\;
(E10)\, \dot{x} = \begin{bmatrix} x_2 \\ -x_1 - 7 x_2^5 \end{bmatrix}.
\]
For each system, all approaches without the LaSalle loop failed to return a valid Lyapunov function. However, the LaSalle-enhanced methods successfully identified the following Lyapunov functions:
\begin{enumerate}
\item For the first system, Z3-Complete-SR-LaSalle found \( V(x) = x_1^4 + 2 x_2^2 \), LP-CEGIS-SR-LaSalle returned \( V(x) = x_1^4 / 100 + x_2^2 / 50 \), and Z3-CEGIS-SR-LaSalle computed \( V(x) = x_1^4 + 2 x_2^2 \).

\item For the second system, Z3-Complete-SR-LaSalle obtained \( V(x) = x_1^6 + 3 x_2^2 \), LP-CEGIS-SR-LaSalle produced \( V(x) = x_1^6 / 100 + 3 x_2^2 / 100 \), and Z3-CEGIS-SR-LaSalle found \( V(x) = x_1^6 + 3 x_2^2 \).

\item For the third system, Z3-Complete-SR-LaSalle discovered \( V(x) = x_1^2 + x_2^2 \), LP-CEGIS-SR-LaSalle returned \( V(x) = x_1^2 / 100 + x_2^2 / 100 \), and Z3-CEGIS-SR-LaSalle also found \( V(x) = x_1^2 + x_2^2 \).
\end{enumerate}
SOSTOOLS returns $V(x) = 0.7248x_1^2 - 8.155e^{-11}x_1x_2 + 0.7248x_2^2$, which is close to an exact Lyapunov function for the third system and fails to provide a valid Lyapunov functions for other two cases. 

\section{Results for Instability Testing}\label{app:instability}

\begin{example}[Instability]\label{ex:instability}
In our final experiment, we conduct ``speedy'' stability and instability tests using Z3-complete-SR on randomly generated systems. A total of 100 systems are randomly generated for each setup, with dimensions ranging from 2 to 10, with monomials constrained to a prescribed maximum degree and up to 3 terms per dimension. Systems identified as trivially unstable, i.e., those with linearizations containing eigenvalues with positive real parts, are excluded from trials. We also exclude vector fields with a zero component, because they cannot be asymptotically stable. A 1-second timeout is enforced for each call of Z3-complete-SR in both stability and instability tests. Table \ref{tab:instability}  reports the percentages of verified cases of stability and instability, along with the average synthesis time for both stable and unstable cases, encompassing the SMT solver time as well as the overhead from symbolic reduction. Notably, a substantial number of systems can be proven to be not GAS in under 1 second, even for dimensions up to 10. We also ran Z3-complete without symbolic reduction and observed a 100\% timeout rate for all cases, including two-dimensional systems. This highlights the advantages of symbolic reduction, enabling complete SMT synthesis for high-dimensional systems that would otherwise be infeasible, even for lower-dimensional cases.
\end{example}

\end{document}